\newcommand{\ket}[1]{|#1\rangle}
\newcommand{\bra}[1]{\langle #1 |}
\journalname{Quantum Information Processing}
\begin{document}
\hypersetup{pdftitle={Strongly Secure Quantum Ramp Secret Sharing Constructed from Algebraic Curves over Finite Fields},pdfauthor={Ryutaroh Matsumoto},pdfkeywords={algebraic curve, quantum secret sharing, non-perfect secret sharing, ramp secret sharing, strong security}}
\title{Strongly Secure Quantum Ramp Secret Sharing
Constructed from Algebraic Curves over Finite Fields}
\titlerunning{Quantum Ramp Secret Sharing by Algebraic Curves}
\author{Ryutaroh Matsumoto}
\institute{Ryutaroh Matsumoto \at
              Department of Communications and Computer Engineering,
              Tokyo Institute of Technology, Japan\\
and Department of Mathematical Sciences,
Aalborg University, Denmark\\
              ORCID: 0000-0002-5085-8879 \\
              \email{ryutaroh@it.ce.titech.ac.jp}}
\date{October 20, 2014}
\maketitle
\begin{abstract}
The first construction of
strongly secure quantum ramp secret sharing
by Zhang and Matsumoto
had an undesirable feature that the dimension of quantum shares must be
larger than the number of shares.
By using algebraic curves over finite fields,
we propose a new construction in which
the number of shares can become arbitrarily large for fixed
dimension of shares.
\keywords{algebraic curve \and
quantum secret sharing \and non-perfect secret sharing \and ramp secret sharing \and strong security}
\PACS{03.67.Dd}
\subclass{81P94 \and 94A62 \and 94B27}
\end{abstract}

\section{Introduction}
Secret sharing (SS) scheme encodes
a secret into multiple shares being
distributed to participants,
so that
only qualified sets of shares can reconstruct the secret
perfectly \cite{stinson06}.
The secret and shares are traditionally classical
information \cite{stinson06},
but now quantum secret and quantum shares can
also be used \cite{cleve99,gottesman00,smith00}.

In perfect SS, if a set of shares is not qualified,
that is, it cannot reconstruct the secret perfectly,
then the set has absolutely no information about the secret.
It is well-known that the share sizes in perfect SS
must be larger than or equal to that of the secret,
both in classical and quantum cases.
To overcome this inefficiency of storing shares,
the ramp classical SS was proposed \cite{blakley85,mceliece81,yamamoto86},
which reduces the share sizes at the cost of
allowing partial information leakage to non-qualified sets
of shares. In ramp SS, a share set is said to be forbidden
if it has no information about secret,
while it is said to be intermediate if it is neither
qualified nor forbidden \cite{iwamoto06,yamamoto86}.

The first quantum ramp SS was proposed by Ogawa et al.\ \cite{ogawa05},
which made the share size $L$ times smaller than its secret,
where $L$ is the number of qudits in the secret.
In their study \cite{ogawa05},
there were two drawbacks.
Firstly, it does not control how
information is leaked to a non-qualified set of shares,
and there exists an undesirable case in which
an intermediate set of shares can understand a qudit 
in the secret, as demonstrated in \cite{zhang14}.
To exclude such a possibility, we introduced a notion
of the strong security of quantum ramp SS, which ensures
no intermediate set can understand a qudit in the secret
(see \cite{zhang14} for its formal definition) and
proposed an explicit construction with the strong security.

The second drawback of \cite{ogawa05} as well as
our previous proposal \cite{zhang14} is that the dimension of quantum shares
must be larger than that of the number of participants.
When the number of participants is large,
handling quantum shares become more difficult,
because handling large dimensional quantum systems
are generally more difficult than smaller ones.
Our previous proposal \cite{zhang14} solved the first drawback
but did not the second.
The purpose of this paper is to solve the first and the second
drawbacks of \cite{ogawa05}
simultaneously.

We will proceed as follows:
Firstly, we modify the strong security definition
given in \cite{zhang14} in Section \ref{sec2},
because the previous definition in \cite{zhang14}
required that all the qualified sets are of the same size,
and also that all the forbidden sets are of the same size.
Secondly, in Section \ref{sec3},
we carry over the classical
strongly secure ramp SS \cite{chen08,matsumoto14css}
using algebraic curves to the quantum setting, then
we prove that the proposed quantum SS has the strong security.
We also present sufficient conditions
for its qualified, intermediate, and forbidden sets
by using the technique in \cite{matsumoto14qss}.
We conclude this paper in Section \ref{sec4}.

\section{Extended Definition of the Strong Security}\label{sec2}
Let $q$ be a prime power, 
$\mathcal{G}_i$ ($i=1$, \ldots, $L$) and
$\mathcal{H}_j$ ($j=1$, \ldots, $n$) be the $q$-dimensional
complex linear spaces,
where $\mathcal{G}_i$ contains the $i$-th qudit of the quantum
secret, while $\mathcal{H}_j$ contains the $j$-th quantum share.
$L$ is the number of qudits in secret and $n$ is the number
of shares or participants.
In this paper we consider the so-called pure state scheme
\cite{cleve99,gottesman00},
in which a pure state secret is converted to pure state shares.
Encoding is an isometric complex linear map from
$\bigotimes_{i=1}^L \mathcal{G}_i$ to
$\bigotimes_{j=1}^n \mathcal{H}_j$.
A subset $J \subset \{1$, \ldots, $n\}$
is said to be qualified if the quantum secret is
perfectly reconstructed from the aggregated shares in
$\bigotimes_{j\in J} \mathcal{H}_j$,
forbidden if the aggregated shares in
$\bigotimes_{j\in J} \mathcal{H}_j$ is always the same
quantum state regardless of the quantum secret,
and intermediate otherwise,
as defined in \cite{ogawa05}.

We introduce a new definition of the strong security,
which does not require the qualified and the forbidden sets
being the same size.
Let $I \subseteq \{1$, \ldots, $L\}$,
$J \subseteq \{1$, \ldots, $n\}$,
$\overline{I} = \{1$, \ldots, $L\}
\setminus I$, and
$\overline{J} = \{1$, \ldots, $n\}
\setminus J$.
Define $\mathcal{G}_I = \bigotimes_{i \in I} \mathcal{G}_i$,
and $\mathcal{G}_{\overline{I}} = \bigotimes_{i \in \overline{I}} \mathcal{G}_i$.
The idea behind the following strong security with respect to
$I$ and $J$ is that the share set $J$ has no idea on
what is a quantum state $\rho_I$
on the part $\mathcal{G}_I$ of the quantum secret.
To formally express this idea, the quantum state $\sigma_J$ of shares
on
$\bigotimes_{j \in J} \mathcal{H}_j$ is required to be independent of
$\rho_I$.
On the other hand, $\sigma_J$ also depends on the quantum
state on $\mathcal{G}_{\overline{I}}$.
When an illegitimate owner of the shares in $J$
is guessing $\rho_I$, she or he is assumed to have
no prior knowledge on the part $\mathcal{G}_{\overline{I}}$,
which enables us to use the fully mixed state as the state
on $\mathcal{G}_{\overline{I}}$.

By using the above ideas, we formally define our extended
version of the strong security.
\begin{definition}\label{def10}
We retain notations from the above discussion.
A quantum ramp secret sharing scheme is said to be strongly secure
with respect to $I$ and $J$
if the quantum state $\sigma_J$ on the share set $J$
is always the same state regardless of the quantum
state $\rho_I \otimes \rho_{\overline{I},\mathrm{mix}}$
of the whole quantum secret,
where $\rho_{\overline{I},\mathrm{mix}}$ is the fully mixed state
on $\mathcal{G}_{\overline{I}}$.
\end{definition}
In our previous paper \cite{zhang14},
a $(k,L,n)$ quantum ramp SS (in the sense of \cite{ogawa05})
was said to be strongly secure
if all $I$ and $J$ with $|I|+|J| \leq k$ satisfy
Definition \ref{def10}, where $k$ was the minimum size of share sets
which can perfectly reconstruct the secret, and $L,n$ had the
same meaning as the present paper.

\section{Explicit Construction of Strongly Secure Quantum Ramp SS}\label{sec3}
In the previous constructions \cite{ogawa05,zhang14}
of quantum ramp SS, shares are generated by using
evaluations of a polynomial at pairwise distinct numbers
in the finite field $\mathbf{F}_q$ with $q$ elements.
Obviously $q$ must be larger than $n$ in those constructions.
In the above constructions, the dimension of quantum shares
is also $q$, and larger values of $q$ usually make
implementation difficult.
The restriction $q > n$ also exists in the classical SS
based on evaluations of a polynomial \cite{mceliece81,shamir79}.
One of standard ways in classical SS
to overcome the restriction $q>n$ is to use
points on an algebraic curve as done in \cite{chen08}.
We will propose an explicit strongly secure quantum ramp SS
based on the idea in \cite{chen08}.

It is well-known that an algebraic curve is
mathematically equivalent to an algebraic function field
of one variable \cite{bn:stichtenoth}.
So we will describe our proposal by using terminology
of algebraic function fields, as done in \cite{chen08}.
We briefly review the algebraic function fields,
see \cite{bn:stichtenoth} for a formal exposition.
The rational function field $\mathbf{F}_q(x)$
over $\mathbf{F}_q$ is the set of $f(x)/g(x)$,
where $f(x)$ and $g(x)$ are polynomials in $x$ with their
coefficients in $\mathbf{F}_q$.
Addition, subtraction, multiplication and division in
$\mathbf{F}_q(x)$ are defined in the standard way.
An algebraic function field $F$ is an extension
field of $\mathbf{F}_q(x)$ such that
the dimension of $F$ as an $\mathbf{F}_q(x)$-linear space
is finite. It is usually denoted as $F/\mathbf{F}_q$ to
indicate that it is defined by equations over $\mathbf{F}_q$.

\begin{example}\label{ex1}
Let $F$ be the field obtained by
adding $y$ to $\mathbf{F}_4(x)$, where
$y$ is a root of the univariate polynomial
$y^2 + y = x^3$ ($x^3$ is regarded as a coefficient).
Then $F$ is an algebraic function field of one variable,
and denoted by $\mathbf{F}_4(x,y)$.
The process of creating $\mathbf{F}_4(x,y)$ from $\mathbf{F}_4(x)$
is the same in spirit as creating the field of complex numbers
from that of real numbers by adding a root of $z^2 = -1$.

Observe also that the equation $y^2+y = x^3$
can also be seen as an algebraic curve.
There are eight points $R_1$,
\ldots, $R_8 \in \mathbf{F}_4^2$
satisfying $y^2+y = x^3$.
For example, $(x,y)=(0,1)$ satisfies $y^2+y = x^3$
and can be $R_1$.
Those eight points can be used for evaluations
in the SS proposed in \cite{chen08} and also in
our proposal described later.
Note that usable points for evaluation
increase from $4$ to $8$.

In the following
we will use so-called $\mathbf{F}_q$-rational
places. $R_1$, \ldots, $R_8$ are examples
of $\mathbf{F}_4$-rational places in this
function field.
The solutions of the defining equation
of $F$, e.g.\ $y^2+y=x^3$, are a subset
of $\mathbf{F}_q$-rational
places, provided that the curve defined by the equation
is \emph{smooth}. See \cite{bn:stichtenoth}
for formal definitions.
\end{example}

We return to the general description of our proposal.
Let $P_1$, \ldots, $P_n$, $Q_1$, \ldots, $Q_L$
be pairwise distinct $\mathbf{F}_q$-rational places
of $F/\mathbf{F}_q$.
A divisor of $F/\mathbf{F}_q$ is a formal sum
of (not necessarily $\mathbf{F}_q$-rational)
places $F/\mathbf{F}_q$, e.g.\ $2R_1 - R_3$ in Example \ref{ex1}.
The support of a divisor $G$ is the set of places whose
coefficient in $G$ is nonzero.
For example, the support of $2R_1 - R_3$ is
the set $\{R_1$, $R_3\}$.
Let $G$ be a divisor whose support contains
none of $P_1$, \ldots, $P_n$, $Q_1$, \ldots, $Q_L$.
For any divisor $G$,
there is a finite-dimensional
$\mathbf{F}_q$-linear space $\mathcal{L}(G)$,
see \cite{bn:stichtenoth} for a formal definition.
\begin{example}\label{ex2}
Consider again $\mathbf{F}_4(x,y)/\mathbf{F}_4$
introduced in Example \ref{ex1}.
Let $Q$ be the common pole of $x$ and $y$,
in other words, the unique point at infinity
belonging to the projective algebraic curve defined by $y^2+y=x^3$.
Then a basis of $\mathcal{L}(uQ)$ as
an $\mathbf{F}_4$-linear space  is
\begin{equation}
\{ x^a y^b \mid 0\leq a, 0\leq b \leq 1, 2a+3b \leq u \}. \label{eq2}
\end{equation}
Thus, an element $h \in \mathcal{L}(uQ)$
is a polynomial in which every term is a multiple of a
monomial in (\ref{eq2}).
We can obtain a value in $\mathbf{F}_4$
by substituting $x,y$ in $h$ by components in $R_i$ (for example
$R_1=(0,1)$)
defined in Example \ref{ex1}.
The obtained value is called the evaluation of $h$ at $R_i$
and denoted by $h(R_i)$.

In our proposal as well as \cite{chen08},
we also use another linear space
$\mathcal{L}(G-Q_1 - \cdots - Q_L)$.
When $G=uQ$ as above, we have
\[
\mathcal{L}(G-Q_1 - \cdots - Q_L) = \{
h \in \mathcal{L}(G) \mid h(Q_i) = 0, \mbox{ for }
i=1, \ldots, L\}.
\]
\end{example}

Now we are ready to describe our proposal.
Since we assumed $\dim \mathcal{G}_i = \dim \mathcal{H}_j = q$
for all $i,j$, we can assume their orthonormal basis to be
$\{ \ket{a} \mid a \in \mathbf{F}_q \}$. Then
the basis of $\bigotimes_{i=1}^L \mathcal{G}_i $
can be written as $\{ \ket{\vec{s}} \mid \vec{s} \in \mathbf{F}_q^L \}$.
To describe quantum ramp SS,
it is sufficient to specify the quantum state of shares
corresponding to a quantum secret
$\ket{\vec{s}} \in \bigotimes_{i=1}^L \mathcal{G}_i $
for every $\vec{s} \in \mathbf{F}_q^L$ as done in \cite{ogawa05,zhang14}.
We assume that
\begin{eqnarray}
L &=& \dim \mathcal{L}(G) - \dim \mathcal{L}(G-Q_1 - \cdots - Q_L),
\label{eq3}\\
0&=& \dim \mathcal{L}(G-P_1 - \cdots - P_n). \label{eq5}
\end{eqnarray}
The secret $\ket{\vec{s}}$ is encoded to
\begin{equation}
\frac{1}{\sqrt{q^{\dim \mathcal{L}(G - Q_1 - \cdots - Q_L)}}}
\sum_{\begin{array}{c}\scriptstyle h \in \mathcal{L}(G)\\
\scriptstyle (h(Q_1),\ldots, h(Q_L)) = \vec{s}
\end{array}}
\ket{h(P_1)} \otimes \ket{h(P_2)} \otimes \cdots \otimes \ket{h(P_n)}.
\label{eq4}
\end{equation}
The mapping $h \in \mathcal{L}(G)$
to $(h(Q_1)$, \ldots, $h(Q_L))$ is $\mathbf{F}_q$-linear
and its kernel is $\mathcal{L}(G - Q_1 - \cdots - Q_L)$
(see the end of Example \ref{ex2}).
By (\ref{eq3}) this mapping is surjective,
and for any $\vec{s} \in \mathbf{F}_q^L$ there exist
$q^{\dim \mathcal{L}(G - Q_1 - \cdots - Q_L)}$ elements $h \in \mathcal{L}(G)$
satisfying $(h(Q_1)$,\ldots, $h(Q_L)) = \vec{s}$,
which justifies the normalization factor in (\ref{eq4}).

On the other hand,
(\ref{eq5}) ensures that
the mapping $h \in \mathcal{L}(G)$
to $(h(P_1)$, \ldots, $h(P_n))$ is $\mathbf{F}_q$-linear and
injective. This guarantees that
terms appearing in the summation of (\ref{eq4}) do not overlap
for different $\vec{s}$, $\vec{s}' \in \mathbf{F}_q^L$,
which means that the encoded shares (\ref{eq4}) for
different $\vec{s}$, $\vec{s}'$ are orthogonal to each other.
From these discussions we see that (\ref{eq4})
maps an orthonormal basis to a subset of an orthonormal basis,
and that (\ref{eq4}) defines a complex linear isometric embedding,
from $\bigotimes_{i=1}^L \mathcal{G}_i$ to
$\bigotimes_{j=1}^n \mathcal{H}_j$.

\begin{remark}
One of the two classical ramp SS proposed by
Chen et al.\ \cite{chen08} is as follows:
For a classical secret
$(s_1$, \ldots, $s_L) \in \mathbf{F}_q^L$,
an element $h \in \mathcal{L}(G)$ with
$h(Q_i) = s_i$ ($i=1$, \ldots, $L$) is chosen uniformly
randomly. Then the $j$-th share is computed
as $h(P_j)$. Its similarity to our proposal
(\ref{eq4}) should be
obvious.
\end{remark}

For its strong security, we have the following theorem.
\begin{theorem}\label{th1}
The above quantum ramp SS is strongly secure with
respect to $I \subset \{1$, \ldots, $L\}$ and
$J \subset \{1$, \ldots, $n\}$ if
\begin{equation}
|J| \leq |\overline{I}| + \min\{ \deg G - L- 2g(F) + 1, n-1-\deg G\},
\label{eq6}
\end{equation}
where $g(F)$ denotes the genus of the algebraic function field
$F/\mathbf{F}_q$, see \cite{bn:stichtenoth} for its definition.
\end{theorem}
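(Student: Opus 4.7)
The plan is to unfold Definition~\ref{def10} into a concrete calculation of $\sigma_J$, and then to establish its invariance in $s_I$ by two complementary arguments corresponding to the two halves of the minimum in (\ref{eq6}).

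First, substituting (\ref{eq4}) into $\sigma_J$ for the input state $\ket{s_I}\bra{s_I}\otimes\rho_{\overline{I},\mathrm{mix}}$, averaging over $s_{\overline{I}}\in\mathbf{F}_q^{|\overline{I}|}$, tracing out the shares in $\overline{J}$, and setting $h''=h-h'$ produces
\[
\sigma_J(s_I)\;\propto\;\sum_{h,\,h''}\bigotimes_{j\in J}\ket{h(P_j)}\bra{(h-h'')(P_j)},
\]
where the outer sum is over $h\in\mathcal{L}(G)$ with $h(Q_i)=s_i$ for $i\in I$ and the inner sum is over $h''\in\mathcal{L}(G-Q_1-\cdots-Q_L-\sum_{j\in\overline{J}}P_j)$. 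Definition~\ref{def10} amounts to showing that the right-hand side is independent of $s_I$.

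The first clause $|J|\leq|\overline{I}|+\deg G-L-2g(F)+1$ will be handled by a direct linear shift. Writing $K$ for a canonical divisor, this clause is equivalent to $\deg G\geq|J|+|I|+2g(F)-1$, which makes $\deg(K-G+\sum_{j\in J}P_j+\sum_{i\in I}Q_i)<0$; the Riemann--Roch identity then yields $\dim\mathcal{L}(G-\sum_{j\in J}P_j)-\dim\mathcal{L}(G-\sum_{j\in J}P_j-\sum_{i\in I}Q_i)=|I|$, i.e.\ surjectivity of the evaluation map $\mathrm{ev}_{Q_I}:\mathcal{L}(G-\sum_{j\in J}P_j)\to\mathbf{F}_q^{|I|}$. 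For any $t_I\in\mathbf{F}_q^{|I|}$, pick $h_0\in\mathcal{L}(G-\sum_{j\in J}P_j)$ with $h_0(Q_i)=t_i$ for $i\in I$; the shift $h\mapsto h+h_0$ (keeping $h''$ fixed) gives an $\mathbf{F}_q$-linear bijection between the configurations for $s_I$ and for $s_I+t_I$ that preserves every rank-one term, because $h_0$ vanishes on $P_J$.

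The second clause $|J|\leq|\overline{I}|+n-1-\deg G$ rearranges to $\deg(G-\sum_{j\in\overline{J}}P_j-\sum_{i\in\overline{I}}Q_i)<0$, so $\mathcal{L}(G-\sum_{j\in\overline{J}}P_j-\sum_{i\in\overline{I}}Q_i)=\{0\}$; equivalently, the joint evaluation $\mathcal{L}(G)\to\mathbf{F}_q^{|\overline{J}|}\times\mathbf{F}_q^{|\overline{I}|}$ at $(P_{\overline{J}},Q_{\overline{I}})$ is injective. Purifying $\rho_{\overline{I},\mathrm{mix}}$ through a reference system $R_{\overline{I}}$ and using that the encoding is isometric, injectivity says that from the shares in $\overline{J}$ together with $R_{\overline{I}}$ one can reconstruct the $I$-part of the secret; the standard ``forbidden $\Leftrightarrow$ qualified'' duality for pure-state quantum secret sharing used in \cite{matsumoto14qss} then delivers the required invariance of $\sigma_J$. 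The main obstacle is precisely this duality step: aligning the canonical-divisor / AG-code bookkeeping with the reference-system purification of $\rho_{\overline{I},\mathrm{mix}}$, and invoking the pure-state duality in the extended ``strong-security'' form appropriate to Definition~\ref{def10}, is the most delicate part of the argument.
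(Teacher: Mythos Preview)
There is a genuine gap. Your displayed expression for $\sigma_J(s_I)$ and the claim that ``Definition~\ref{def10} amounts to showing that the right-hand side is independent of $s_I$'' only cover the case $\rho_I=\ket{s_I}\bra{s_I}$. Strong security must hold for \emph{arbitrary} quantum states $\rho_I$, so you also need the off-diagonal pieces $\sigma_J(\ket{s_I}\bra{s'_I})$ with $s_I\neq s'_I$ to vanish; your shift bijection under the first clause leaves those untouched (it only shows that $\sigma_J(\ket{s_I}\bra{s'_I})$ depends on $s_I-s'_I$ alone). Your second-clause argument is likewise too optimistic: injectivity of the evaluation at $(P_{\overline{J}},Q_{\overline{I}})$ gives \emph{classical} recovery of the value $s_I$ from $\overline{J}\cup R_{\overline{I}}$, not quantum recovery of a superposition $\ket{\psi_I}$. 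The qualified/forbidden duality from \cite{matsumoto14qss} that you invoke characterises ``$\overline{J}\cup R_{\overline{I}}$ qualified'' by \emph{two} coding-theoretic conditions (these are exactly (\ref{eq1001}) and (\ref{eq1010}) of the paper's Proposition~\ref{prop2}), and injectivity supplies only one of them; so the duality step does not go through using the second clause alone.

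The two halves of the minimum in (\ref{eq6}) are not two alternative routes to the same conclusion but two ingredients of a single proof: the first clause yields the surjectivity that gives (\ref{eq1001}), while the second clause yields the injectivity that gives (\ref{eq1010}), and both are required for $J$ to be forbidden in the purified scheme. If you prefer to stay with your direct density-matrix computation, the fix is to write $\sigma_J$ for a general input $\ket{s_I}\bra{s'_I}$; then the relevant difference $h''=h-h'$ must lie in $\mathcal{L}\bigl(G-\sum_{i\in\overline{I}}Q_i-\sum_{j\in\overline{J}}P_j\bigr)$ with $h''(Q_i)=s_i-s'_i$ on $I$, and the second clause forces this space to be $\{0\}$, killing all cross terms; your shift argument from the first clause then finishes the diagonal part.
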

Its proof is technically complicated and heavily uses
the theory of algebraic function field \cite{bn:stichtenoth},
so we move it to Appendix \ref{app1}.

For quantum ramp SS to be useful,
a procedure for reconstructing the quantum secret
and sufficient conditions for qualified and forbidden sets
are indispensable.
On the other hand, actually the above proposal is a special
case of quantum ramp SS constructed from algebraic curves
studied in \cite{matsumoto14qss}.
By straightforward application of \cite{matsumoto14qss},
$\{1$, \ldots, $n\} \supset J$ is
qualified if
\begin{equation}
|J| \geq \max\{1+\deg G, n - (\deg G - L - 2g(F) + 1)\}, \label{eq7}
\end{equation}
and $J$ is forbidden if
\begin{equation}
|J| \leq  \min\{ \deg G - L- 2g(F) + 1, n-1-\deg G\}.
\label{eq8}
\end{equation}
Note that (\ref{eq6}) contains (\ref{eq8}) as its special case $|\overline{I}|=0$.
The reconstruction procedure in \cite{matsumoto14qss}
can also be used for the proposal in this paper.

The algebraic function field in Examples \ref{ex1} and \ref{ex2}
has genus $g(F)=1$. To make $n$ larger with fixed $q$,
we must find an algebraic function fields with many $\mathbf{F}_q$-rational
places. It is well-known \cite{bn:stichtenoth}
that the number of $\mathbf{F}_q$-rational
places is at most $1 + q + g(F) \lfloor 2 \sqrt{q} \rfloor$.
$F/\mathbf{F}_4$ in Examples \ref{ex1} and \ref{ex2} reaches this upper bound,
because the place $Q$ in Example \ref{ex2} is also $\mathbf{F}_4$-rational
and $F/\mathbf{F}_4$ in Examples \ref{ex1} and \ref{ex2} has
nine $\mathbf{F}_4$-rational places.
Requiring more $\mathbf{F}_q$-rational
places generally makes $g(F)$ larger, which makes
inequalities (\ref{eq6}), (\ref{eq7}) and (\ref{eq8}) weaker.
For fixed $q$ and $n$, it is desirable to use an algebraic
function field with smaller $g(F)$. Search for such ones has
been an active research area in pure mathematics
for past 30 years, see
\cite{bn:stichtenoth}.
In particular, it is known that
for fixed $q$ we can construct an algebraic function field
with arbitrarily many $\mathbf{F}_q$-rational
places.

\section{Conclusion}\label{sec4}
In this paper we argued that
the previously proposed strongly secure quantum ramp SS \cite{zhang14}
becomes difficult in implementation when the number $n$
of participants is large, because the dimension $q$
of
each quantum share must be $>n$.
To overcome this drawback,
we proposed new quantum ramp SS that allows arbitrarily
large $n$ for fixed $q$ while retaining the strong security.
The proposed construction is similar to the classical ramp SS
proposed by Chen et al.\ \cite{chen08}.

\begin{acknowledgements}
This research is partly supported by the National
Institute of Information and Communications Technology,
Japan, and by the Japan Society for the Promotion of Science Grant
 Nos.\ 23246071 and 26289116,
and the                                                                   
 Villum Foundation through their VELUX Visiting Professor Programme             
 2013--2014.
\end{acknowledgements}

\appendix
\section{Proof of Theorem \ref{th1}}\label{app1}
To prove Theorem \ref{th1},
we will prove a proposition covering a more general
class of quantum ramp SS.
We consider a quantum ramp SS constructed from
a pair of linear codes $C_2 \subsetneq C_1 \subseteq \mathbf{F}_q^n$
with $\dim C_1 - \dim C_2 = L$,
which was considered in \cite{matsumoto14qss}.

Encoding
is done as follows:
We will encode a quantum secret to
$n$ qudits in $\bigotimes_{j=1}^n \mathcal{H}_j$
by a complex linear isometric embedding.
To specify such an embedding, it is enough to specify
the image of each basis state $\ket{\vec{s}} \in \bigotimes_{i=1}^{L} \mathcal{G}_i$.
Fix an $\mathbf{F}_q$-linear isomorphism $f: \mathbf{F}_q^{\dim C_1 - \dim C_2}
\rightarrow C_1 / C_2$.
We encode $\ket{\vec{s}}$ to
\begin{equation}
\frac{1}{\sqrt{|C_2|}} \sum_{\vec{x} \in f(\vec{s})} \ket{\vec{x}} \in
\bigotimes_{j=1}^n \mathcal{H}_j. \label{eq10}
\end{equation}
Recall that by definition of $f$, $f(\vec{s})$ is a subset of $C_1$,
$f(\vec{s}) \cap f(\vec{s}_1) = \emptyset$ if $\vec{s} \neq \vec{s}_1$,
and $f(\vec{s})$ contains $|C_2|$ vectors. From these properties
we see that (\ref{eq10}) defines a complex linear isometric embedding.
The quantum system $\mathcal{H}_j$ is distributed to the $j$-th
participant.
For $I \subset \{1$, \ldots, $L\}$,
the map $P_I$ denotes the projection of a vector to the index set
$I$, that is, for $\vec{s}=(s_1$, \ldots, $s_L) \in \mathbf{F}_q^L$,
$P_I(\vec{s}) = (s_i)_{i \in I}$, which is a vector with $|I|$ components.

\begin{proposition}\label{prop2}
Let $f : \mathbf{F}_q^L \rightarrow C_1/C_2$
be as above.
Define
\begin{eqnarray}
C'_1 &=& \{ (\vec{x}, P_{\overline{I}}(\vec{s})) \mid
\vec{s}\in \mathbf{F}_q^L, \vec{x}\in  f(\vec{s})\},\\
\label{eq:cp1}
C'_2 &=& \{ (\vec{x}, P_{\overline{I}}(\vec{s})) \mid
\vec{s}\in \mathbf{F}_q^L, P_I(\vec{s})=\vec{0},
\vec{x}\in  f(\vec{s})\}.
\label{eq:cp3}
\end{eqnarray}
Then the quantum ramp SS constructed from
$C_1 \supsetneq C_2$ is strongly secure with respect to $I$ and $J$
if and only if
\begin{eqnarray}
\dim P_J (C'_1) - \dim P_J (C'_2)
 &=& 0, \label{eq1001}\\
\dim P_{\overline{J}\cup \{n+1, \ldots, n+|\overline{I}|\}} (C'_1) - \dim P_{\overline{J}\cup \{n+1, \ldots, n+|\overline{I}|\}} (C'_2) &=& |I|. \label{eq1010}
\end{eqnarray}
\end{proposition}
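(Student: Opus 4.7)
The plan is to compute the reduced state $\sigma_J$ on $\bigotimes_{j\in J}\mathcal{H}_j$ directly and read off the conditions under which it is independent of $\rho_I$. By linearity in $\rho_I$ it suffices to treat pure inputs $\rho_I = \ket{\phi}\bra{\phi}$ with $\ket{\phi}=\sum_{\vec{s}_I} \beta_{\vec{s}_I}\ket{\vec{s}_I}$. Expanding $\rho_I\otimes\rho_{\overline{I},\mathrm{mix}}$ as a uniform mixture over $\vec{s}_{\overline{I}}\in\mathbf{F}_q^{|\overline{I}|}$, applying the isometry (\ref{eq10}), and tracing out the shares in $\overline{J}$ produces
\[
\sigma_J = \frac{1}{q^{|\overline{I}|}|C_2|}\sum_{\vec{s}_I,\vec{s}'_I} \beta_{\vec{s}_I}\overline{\beta_{\vec{s}'_I}}\, M_{\vec{s}_I,\vec{s}'_I},
\]
where $M_{\vec{s}_I,\vec{s}'_I}=\sum \ket{P_J(\vec{x})}\bra{P_J(\vec{y})}$, the sum ranging over triples $(\vec{s}_{\overline{I}},\vec{x},\vec{y})$ with $\vec{x}\in f(\vec{s}_I,\vec{s}_{\overline{I}})$, $\vec{y}\in f(\vec{s}'_I,\vec{s}_{\overline{I}})$ and $P_{\overline{J}}(\vec{x})=P_{\overline{J}}(\vec{y})$. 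Since the monomials $\beta_{\vec{s}_I}\overline{\beta_{\vec{s}'_I}}$ (for $\vec{s}_I\neq\vec{s}'_I$) together with the $|\beta_{\vec{s}_I}|^2$ subject only to $\sum_{\vec{s}_I}|\beta_{\vec{s}_I}|^2=1$ furnish enough independent parameters, $\sigma_J$ is $\ket{\phi}$-independent if and only if (A) $M_{\vec{s}_I,\vec{s}'_I}=0$ for all $\vec{s}_I\neq\vec{s}'_I$, and (B) $M_{\vec{s}_I,\vec{s}_I}$ does not depend on $\vec{s}_I$.

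For condition (A), I would substitute $\vec{z}=\vec{x}-\vec{y}\in C_1$, so that $[\vec{z}]_I=\vec{s}_I-\vec{s}'_I$ and $[\vec{z}]_{\overline{I}}=0$ under $f^{-1}$. Because the summands $\ket{P_J(\vec{x})}\bra{P_J(\vec{y})}$ are basis outer products, $M_{\vec{s}_I,\vec{s}'_I}=0$ iff no contributing triple exists, which rephrases as: there is no $\vec{z}\in C_1$ with $[\vec{z}]_{\overline{I}}=0$, $[\vec{z}]_I\neq 0$ and $P_{\overline{J}}(\vec{z})=0$. Embedding $\vec{z}\mapsto(\vec{z},0)\in\mathbf{F}_q^{n+|\overline{I}|}$ identifies this with $\ker(P_{\overline{J}\cup\{n+1,\ldots,n+|\overline{I}|\}})\cap C'_1\subseteq C'_2$; by rank--nullity and $\dim C'_1-\dim C'_2=|I|$ this is exactly (\ref{eq1010}).

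For condition (B), the substitution $\vec{z}=\vec{x}-\vec{y}\in C_2$ factors the matrix entry $\bra{\vec{a}}M_{\vec{s}_I,\vec{s}_I}\ket{\vec{b}}$ as
\[
\bigl|\{\vec{x}\in C^{\vec{s}_I}:P_J(\vec{x})=\vec{a}\}\bigr|\cdot\bigl|\{\vec{z}\in C_2:P_{\overline{J}}(\vec{z})=0,\,P_J(\vec{z})=\vec{a}-\vec{b}\}\bigr|,
\]
where $C^{\vec{s}_I}=\{\vec{x}\in C_1:[\vec{x}]_I=\vec{s}_I\}$ is a coset of $C^{\vec{0}}$. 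The second factor is $\vec{s}_I$-independent and is nonzero at $\vec{b}=\vec{a}$ (it contains $\vec{z}=0$), while the first factor equals either $0$ or the fixed value $|C^{\vec{0}}|/|P_J(C^{\vec{0}})|$ depending on whether $\vec{a}\in P_J(C^{\vec{s}_I})$. Hence (B) is equivalent to $P_J(C^{\vec{s}_I})$ being constant in $\vec{s}_I$, i.e., $P_J(C_1)=P_J(C^{\vec{0}})$. Observing $P_J(C'_1)=P_J(C_1)$, $P_J(C'_2)=P_J(C^{\vec{0}})$, and $C'_2\subseteq C'_1$, this rewrites as $\dim P_J(C'_1)=\dim P_J(C'_2)$, which is (\ref{eq1001}).

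The main obstacle I anticipate is the careful justification of the ``only if'' direction: one must be certain that the operators $M_{\vec{s}_I,\vec{s}'_I}$ cannot conspire to cancel across basis states (in (A)) or across different $\vec{s}_I$ on the diagonal (in (B)). This is exactly where the fact that each $M_{\vec{s}_I,\vec{s}'_I}$ expands in the basis $\ket{\vec{a}}\bra{\vec{b}}$ with \emph{nonnegative} integer coefficients $N_{\vec{s}_I,\vec{s}'_I}(\vec{a},\vec{b})$ is used: operator identities collapse to cardinality identities, so the purely combinatorial count reproduces the code-theoretic dimension conditions on $C'_1$ and $C'_2$.
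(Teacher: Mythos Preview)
Your argument is correct and self-contained, but it follows a genuinely different route from the paper. The paper does not compute $\sigma_J$ at all: instead it \emph{purifies} $\rho_{\overline{I},\mathrm{mix}}$ by adjoining a reference system of $|\overline{I}|$ qudits, observes that the resulting pure state on $\bigotimes_{j=1}^n\mathcal{H}_j$ together with the reference is exactly the encoding of the $|I|$-qudit secret $\sum_{\vec{s}_I}\alpha(\vec{s}_I)\ket{\vec{s}_I}$ under a new ramp scheme built from $C'_1\supsetneq C'_2$ (via the map $f'(\vec{s}_I)=\{(\vec{x},\vec{s}_{\overline{I}}):\vec{x}\in f(\vec{s}_I\vec{s}_{\overline{I}})\}$), and then invokes the forbidden-set criterion of \cite{matsumoto14qss} for this new scheme, which is precisely the pair (\ref{eq1001})--(\ref{eq1010}). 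In other words, the paper \emph{explains} why $C'_1,C'_2$ are the right objects: the appended $|\overline{I}|$ coordinates are the reference qudits, and strong security with respect to $(I,J)$ is literally the statement that $J$ is forbidden in the enlarged $(n+|\overline{I}|)$-share scheme.

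Your direct computation trades that structural insight for self-containment: you never need the black-box criterion from \cite{matsumoto14qss}, and you recover (\ref{eq1001}) and (\ref{eq1010}) purely from rank--nullity and coset counting. Two small points worth tightening. First, the reduction ``$\sigma_J$ constant $\Leftrightarrow$ (A) and (B)'' deserves one explicit sentence: take $\ket{\phi}=\ket{\vec{t}}$ to force (B), then $\ket{\phi}=(\ket{\vec{t}}+\ket{\vec{t}'})/\sqrt{2}$ and $\ket{\phi}=(\ket{\vec{t}}+i\ket{\vec{t}'})/\sqrt{2}$, using $M_{\vec{s}'_I,\vec{s}_I}=M_{\vec{s}_I,\vec{s}'_I}^{\dagger}$, to force (A). Second, in the step ``(B) $\Leftrightarrow$ $P_J(C^{\vec{s}_I})$ is constant in $\vec{s}_I$'', the implication from left to right uses that the \emph{diagonal} entry $\bra{\vec{a}}M_{\vec{s}_I,\vec{s}_I}\ket{\vec{a}}$ is a positive multiple of the indicator of $\vec{a}\in P_J(C^{\vec{s}_I})$; it is worth saying this explicitly rather than folding it into the final paragraph about nonnegative coefficients.
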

\begin{proof}
By reordering indices we may assume
$I = \{1$, \ldots, $|I|\}$.
For $\vec{s}_I \in \mathbf{F}_q^{|I|}$
define
\[
f'(\vec{s}_I) = \{ (\vec{x}, \vec{s}_{\overline{I}}) \mid
\vec{s}_{\overline{I}}\in \mathbf{F}_q^{|_{\overline{I}}|}, 
\vec{x}\in  f(\vec{s}_I\vec{s}_{\overline{I}})\}.
\]
We have $\dim C'_1 = \dim C_1$, $\dim C'_2 = \dim C_2 +
|\overline{I}|$,
and $f'$ is an $\mathbf{F}_q$-linear isomorphism
from $\mathbf{F}_q^{|I|}$ to $C'_1 / C'_2$.
In the definition of strong security,
the quantum secret has the form
\[
\left(\sum_{\vec{s}_I \in \mathbf{F}_q^{|I|}} \alpha(\vec{s}_I) \ket{\vec{s}_I}\right)
\left(\sum_{\vec{s}_I \in \mathbf{F}_q^{|I|}} \alpha(\vec{s}_I) \bra{\vec{s}_I}\right)
\otimes 
\frac{1}{q^{|\overline{I}|}}
\sum_{\vec{s}_{\overline{I}} \in \mathbf{F}_q^{|\overline{I}|}}  \ket{\vec{s}_{\overline{I}}}\bra{\vec{s}_{\overline{I}}},
\]
whose purification is
\begin{equation}
\sum_{\vec{s}_I \in \mathbf{F}_q^{|I|}} \alpha(\vec{s}_I) \ket{\vec{s}_I}
\otimes 
\frac{1}{\sqrt{q^{|\overline{I}|}}}
\sum_{\vec{s}_{\overline{I}} \in \mathbf{F}_q^{|\overline{I}|}}  \ket{\vec{s}_{\overline{I}}}
\ket{\vec{s}_{\overline{I}}}_R, \label{eq:qs}
\end{equation}
where $\ket{\vec{s}_{\overline{I}}}_R$ is a state vector in the reference system
for purification.
The encoding procedure defined in this appendix transforms
(\ref{eq:qs}) to
\begin{eqnarray}
&& \frac{1}{\sqrt{q^{|\overline{I}|}}} 
\sum_{\vec{s}_I \in \mathbf{F}_q^{|I|}} \alpha(\vec{s}_I)
\frac{1}{\sqrt{|C_2|}}
\sum_{\vec{s}_{\overline{I}} \in \mathbf{F}_q^{|\overline{I}|}}
\sum_{\vec{x}\in f(\vec{s}_I \vec{s}_{\overline{I}})} \ket{\vec{x}}\ket{\vec{s}_{\overline{I}}}_R\nonumber\\
&=& \frac{1}{\sqrt{|C'_2|}}
\sum_{\vec{s}_I \in \mathbf{F}_q^{|I|}} \alpha(\vec{s}_I)
\sum_{\vec{y} \in f'(\vec{s}_I)} \ket{\vec{y}}. \nonumber
\end{eqnarray}
The joint
quantum state of shares and the reference system for purification
can be regarded as encoded shares from
the quantum secret
\[
\sum_{\vec{s}_I \in \mathbf{F}_q^{|I|}} \alpha(\vec{s}_I) \ket{\vec{s}_I},
\]
by using $C'_1/C'_2$ and $f'$.
Equations (\ref{eq1001}) and (\ref{eq1010}) 
is the necessary and sufficient condition \cite{matsumoto14qss}
for
$J$ to be a forbidden set, which shows the theorem. \qed
\end{proof}

We start our proof of Theorem \ref{th1}.
Hereafter we make a different
 assumption $\overline{I} = \{1$, \ldots, $|\overline{I}|\}$.
$C'_1$ and $C'_2$ in Proposition \ref{prop2} become
\begin{eqnarray*}
C'_1 &=& \{ (f(P_1), \ldots, f(P_n), f(Q_1), \ldots, f(Q_{|\overline{I}|})) \mid f \in \mathcal{L}(G) \},\\
C'_2 &=& \{ (f(P_1), \ldots, f(P_n), f(Q_1), \ldots, f(Q_{|\overline{I}|})) \mid f \in \mathcal{L}(G), \forall i\in I, f(Q_i) = 0 \}\\
&=& \{ (f(P_1), \ldots, f(P_n), f(Q_1), \ldots, f(Q_{|\overline{I}|})) \mid f \in \mathcal{L}(G-\sum_{i\in I}Q_i ) \}.
\end{eqnarray*}
Equation (\ref{eq6}) ensures that
\begin{eqnarray}
&&|J| \leq |\overline{I}| + n-1-\deg G \nonumber\\
&\Leftrightarrow & \deg G \leq |\overline{J}| + |\overline{I}| - 1. \label{eq21}
\end{eqnarray}
By \cite{bn:stichtenoth},
(\ref{eq21}) implies that
the mapping $\mathcal{L}(G) \ni
h \mapsto (h(P_{j_1})$, \ldots, $h(P_{j_{|\overline{J}|}})$,
$h(Q_{i_1})$, \ldots, $h(Q_{i_{|\overline{I}|}})) \in P_J (C'_1)$
is $\mathbf{F}_q$-linear and bijective,
where $\{i_1$, \ldots, $i_{|\overline{I}|}\} = \overline{I}$
and $\{j_1$, \ldots, $j_{|\overline{J}|}\} = \overline{J}$.
The above mapping also gives $P_J (C'_2)$
as its image of $\mathcal{L}(G-\sum_{i\in I}Q_i )$.
Equation (\ref{eq3}) implies
\[
|I| = \dim \mathcal{L}(G)- \dim \mathcal{L}(G-\sum_{i\in I}Q_i ),
\]
which in turn implies (\ref{eq1010}) by the above bijection
between $\mathcal{L}(G)$ and $P_J (C'_1)$.

On the other hand, (\ref{eq6}) also ensures that
\begin{eqnarray}
&&|J| \leq |\overline{I}| +  \deg G - L- 2g(F) + 1 \nonumber\\
&\Leftrightarrow & |J| \leq \deg G - |I| - 2g(F) + 1 \nonumber\\
&\Leftrightarrow & |J| \leq \deg (G-\sum_{i\in I}Q_i) - 2g(F) + 1. \label{eq22}
\end{eqnarray}
By \cite{bn:stichtenoth},
(\ref{eq22}) implies that
the mapping $\mathcal{L}(G-\sum_{i\in I}Q_i) \ni
h \mapsto (h(P_{j'_1})$, \ldots, $h(P_{j'_{|J|}}))
\in \mathbf{F}_q^{|J|}$
is $\mathbf{F}_q$-linear and surjective,
where $\{j'_1$, \ldots, $j'_{|J|}\} = J$.
On the other hand, the image of the above mapping
is $P_J(C'_2)$, which means that $P_J(C'_2) = \mathbf{F}_q^{|J|}
= P_J(C'_1)$, which in turn means that (\ref{eq1001}) holds.
Since we have confirmed (\ref{eq1001}) and
(\ref{eq1010}), the proof of Theorem \ref{th1} is
completed by using Proposition \ref{prop2}.
\qed


\end{document}